\newcommand{\ds}{\displaystyle}
\newcommand{\blackbox}{\mbox{}\\  \hspace*{\fill} \rule{3mm}{3mm}}
\newcommand{\bbm}{\begin{boldmath}}
\newcommand{\ebm}{\end{boldmath}}
\newcommand{\tikzpicturefont}{0.5}
\newtheorem{theo}{Theorem}
\newtheorem{lem}{Lemma}[section]
\newtheorem{defy}{Definition}[section]
\newtheorem{exam}{Example}
   {\rule{\textwidth}{0.3mm}\newline%
    \begin{exam}\begin{em}\mbox{}\\}%
   {\mbox{}\\ \rule{\textwidth}{0.3mm}\end{em}\end{exam}}%
\newenvironment{proof}%
    {{\bf Proof}}%
    {\blackbox}%
\newtheorem{remk}{{\bf Remark:}}[section]
   {\begin{remk}\begin{normalshape}\mbox{}}%
   {\hfill  \end{normalshape}\end{remk}}%
\tikzstyle{level 1}=[sibling distance=40mm]
\tikzstyle{level 2}=[sibling distance=20mm]
\tikzstyle{level 3}=[sibling distance=10mm]
\tikzstyle{level 4}=[sibling distance=5mm]
\begin{document}
\begin{center}
{\Large {\bf An Algorithm to Solve the}}\\
{\Large {\bf Equal-Sum-Product Problem}}\\
\vspace{0.5cm}
{\bf M. A. Nyblom and C. D. Evans}\\
\end{center}
\begin{abstract}
A recursive algorithm is constructed which finds all solutions to a class of Diophantine equations connected to the 
problem of determining ordered $n$-tuples of positive integers satisfying the property that their sum is equal to their product. An 
examination of the use of Binary Search Trees in implementing the algorithm into a working program is given. In addition an application of the algorithm 
for searching possible extra exceptional values of the equal-sum-product problem is explored after demonstrating a link between these numbers and the
 Sophie Germain primes.
\end{abstract}
\section{Introduction}
Suppose we are asked to consider the following three arithmetic identities
\[
2+2=4\mbox{ ,}\hspace{1.7cm}1+2+3=6\mbox{ ,}\hspace{1.7cm}1+1+2+2+2=8\mbox{ .}
\]
What can we say is a feature common to each of the three identities? Looking at the second equality we might first think that we are dealing with the property of perfect
numbers, namely that the number $6$ is equal to the sum of it's proper divisors $1$, $2$ and $3$, but neither $4$ or $8$ are perfect numbers. However if each of the right-hand sides are expressed as products in the following manner
\[
2+2=2\cdot2\mbox{ ,}\hspace{1.7cm}1+2+3=1\cdot2\cdot3\mbox{ ,}\hspace{1.7cm}1+1+2+2+2=1\cdot1\cdot2\cdot2\cdot2\mbox{ ,}
\]
then we can see at once that the original identities express the fact that three sets of numbers $\{ 2,2\}$, $\{ 1,2,3\}$ and $\{1,1,2,2,2\}$, have the property that the sum of their elements is equal to their respective products. In view of these sets, one is naturally drawn to question whether it is possible to find for each integer $n\geq 2$, all sets of $n$ positive integers having the equal-sum-product property. We shall refer to this problem as the Equal-Sum-Product problem in $n$ variables, or the ESP-Problem in $n$ variables for short. Determining a set of $n$ positive integers satisfying the equal-sum-product property is equivalent to solving the following equation
\begin{equation}
x_{1}x_{2}\cdots x_{n}=x_{1}+x_{2}+\cdots +x_{n}\mbox{ ,}\label{eq:2ex}
\end{equation}
in positive integers $x_{i}$, where without loss of generality we may assume $x_{1}\leq x_{2}\leq\cdots\leq x_{n}$. Equations in which only positive integer solutions are sought are referred to as Diophantine equations, after the mathematician Diophantos who lived in Alexandria around 300 A.D. When examining Diophantine equations the following three questions naturally arise: Does the equation have a solution? Are there only finitely many solutions? Is it possible to determine all solutions? In the case of the Diophantine equation in (\ref{eq:2ex}), the first two questions have been independently answered in the affirmative by M. W. Ecker in \cite{ecker} and by L. Kurlandchik and A. Nowicki in \cite{kurland}. In particular the first question was easily answered via the observation that for any $n\geq 2$ the $n$-tuple $(\underbrace{1,1,\ldots ,1}_{(n-2)1's},2,n)$ is clearly a solution of~(\ref{eq:2ex}). While for the second question, the finiteness of solutions followed from a demonstration of a boundedness result, namely that for each $n\geq 2$ when (\ref{eq:2ex}) is satisfied by an $n$-tuple $(x_{1},x_{2},\ldots ,x_{n})$ then the largest component can be at most $n$, thus yielding an extremely large but finite upper bound of $n^{n}$ for the number of solutions. Despite the vast search space of solutions, we will show in this paper that the third question can also be answered in the affirmative, in the case of (\ref{eq:2ex}) by constructing an algorithm which generates all solutions to the ESP-Problem in $n$ variables. To help construct the algorithm, it first will be necessary to determine the structure of the solution set $S(n)$. In particular, by making use of existing results again found in \cite{ecker}, \cite{kurland} 
 we will show that $\ds S(n)=\cup_{r=2}^{\lfloor\log_{2}n\rfloor +1}S_{r}(n)$, where $S_{r}(n)$ is the set of solutions of (\ref{eq:2ex}), having precisely $n-r$ unit and $r$ non-unit components, that is of the form $(\underbrace{1,1,\ldots ,1,}_{(n-r)1's}x_{1},\dots ,x_{r})$. For notational convenience the solutions contained in $S_{r}(n)$ will be denoted by $(x_{1},x_{2},\ldots,x_{r};n-r)$, with $x_{i}\geq 2$.\\
\\
From this description of the solution set $S(n)$ the basic function of the algorithm can be explained. Fundamentally the algorithm will be recursive in nature, as it shall generate solutions in each set $S_{r}(n)$ for $r=3,\ldots ,\lfloor\log_{2}n\rfloor +1$, 
 from those found in $S_{r-1}(r+j)$, for $j=2^{r-2}-r,\ldots ,\left\lfloor \frac{n-3r+2}{2}\right\rfloor$. Specifically we will show that to generate each set $S_{r}(n)$, it will suffice to examine the solutions $(x_{1},x_{2},\ldots ,x_{r-1};j+1)\in S_{r-1}(r+j)$, for $j=2^{r-2}-r,\ldots ,\left\lfloor \frac{n-3r+2}{2}\right\rfloor$ which satisfy the divisibility condition $(x_{1}+x_{2}+\cdots +x_{r-1}-1)|(x_{1}+x_{2}+\cdots +x_{r-1}+n-r)$, and construct $S_{r}(n)$ as the set of elements of the form
$(x_{1},x_{2},\ldots ,x_{r-1},w;n-r)$, where $w:=(x_{1}+x_{2}+\cdots +x_{r-1}+n-r)/(x_{1}+x_{2}+\cdots +x_{r-1}-1)$. Given that the set $S_{2}(n)$ can be determined from an explicit formula in (\ref{eq:1ex}), we can see that when $r-1>2$ the algorithm must repeatedly apply the recursive procedure to generate each set $S_{r-1}(r+j)$. This presents us with a problem in that before each set $S_{r}(n)$ can be constructed, one will first have to keep track of a specific sequence of intermediary sets through descending values of $r$ to $r=2$, and then determine an associated  group of ``base`` sets $S_{2}(\cdot)$ via (\ref{eq:1ex}). Secondly one must then apply the respective divisibility tests in the reverse sequence order to construct each of the intermediary sets, before the set $S_{r}(n)$ can finally be determined. Thus how are we to store, search and retrieve the solutions contained in these intermediary sets?  As shall be seen later, this question will be solved by the use of a commonly occurring data structure known as a Binary Search Tree, which can efficiently search and retrieve stored data in the form of nodes within a tree structure. In our case, the intermediary sets and the set $S_{r}(n)$, will form the nodes of an evolving Binary Search Tree as pictured in Figures 1, 2, and 3.\\
\\
Although a complexity analysis will not be performed here, one can at least conclude from the recursive procedure described above that the algorithm must terminate, for each input $n > 2$. We now briefly outline the structure of the remaining paper. In Section 2, we begin by proving a number of preliminary results leading to the structure of the solution set $S(n)$, and then introduce the main features of the Binary Search Tree. Within Section 3 our task will be to establish the recursive procedure, which will then be formulated into the pseudo code of Algorithm 3.1. This will be followed by an examination of the use of Binary Search Trees that will be used to implement the algorithm into a working program, which the interested reader can access at \cite{program1}. 
 Finally in Section 4, we investigate a theoretical application of Algorithm 3.1 to a well known conjecture connected with the Diophantine equation in (\ref{eq:2ex}). This conjecture asserts that the only integers $n>2$, for which the $n$-tuple $(2,n;n-2)$ is a unique solution to (\ref{eq:2ex}), are those contained in the set $E=\{ 2,3,4,6,24,114,174,444\}$, the so-called set of exceptional values. As reported in \cite{ecker}, despite extensive computer searches of all positive integers less than $10^{10}$, no new elements of $E$ have been revealed. Although we will not resolve the conjecture here, we shall by using an argument based on the above recursive procedure, prove that if $n>2$ is an element of $E$, then $n-1$ must be a Sophie Germain prime, that is both $n-1$ and $2(n-1)+1$ are prime. Equipped with this result together with the scarcity of the Sophie Germain primes, we can see that Algorithm 3.1 provides a basis for a more refined computer search for possible extra elements of $E$.

\section{Preliminaries}
In this section we shall first establish some preliminary technical lemmas that will 
 be needed in Section 3. 
 In addition, a brief overview of the data structure
 known as a Binary Search Tree, which will be used for the implementation of the
 algorithm, shall also be given. To begin we note that the first of the required
 lemmas was proved in \cite{ecker},\cite{kurland}, but we present here an alternate proof based
 on a divisibility argument, which later will form the basis for the construction
 of the algorithm. The following result states that in any solution to the ESP-Problem
 in $n>2$ variables, there must be at least one unit component.
\begin{lem}
The ESP-Problem in two variables has exactly one solution, namely
 $\{ (x_{1},x_{2})\in\mathbb{N}\times\mathbb{N}:x_{1}x_{2}=x_{1}+x_{2}\} =\{ (2,2)\}$.
 While if $(x_{1},x_{2},\ldots ,x_{n})$ is a solution to the ESP-Problem in $n\geq 3$
 variables, then there must exist at least one $i\in\{ 1,2,\ldots ,n\}$ such that $x_{i}=1$.
\end{lem}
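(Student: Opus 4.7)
The statement has two parts, and I would handle them in order, using the divisibility flavor the authors advertise.

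For the two-variable case, the plan is to solve $x_1x_2=x_1+x_2$ directly. Rearranging gives $x_1(x_2-1)=x_2$, so $x_2-1$ must divide $x_2$, hence $x_2-1$ divides $x_2-(x_2-1)=1$. This forces $x_2=2$, and then $x_1=2$, so $(2,2)$ is the unique solution. (The factorization $(x_1-1)(x_2-1)=1$ is an equivalent one-line version, but the divisibility phrasing is the one that generalizes.)

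For the second part, assuming $x_1\le x_2\le\cdots\le x_n$, I would argue by contradiction and suppose $x_1\ge 2$, so that every $x_i\ge 2$. I would isolate the largest component $x_n$ in (\ref{eq:2ex}) to obtain
\begin{equation*}
x_n\bigl(x_1x_2\cdots x_{n-1}-1\bigr)=x_1+x_2+\cdots+x_{n-1},
\end{equation*}
which (since the right-hand side is positive) is the divisibility identity
\begin{equation*}
x_n=\frac{x_1+x_2+\cdots+x_{n-1}}{x_1x_2\cdots x_{n-1}-1}.
\end{equation*}
Using $x_i\le x_{n-1}\le x_n$ for $i\le n-1$ gives $x_1+\cdots+x_{n-1}\le (n-1)x_n$, so $x_1x_2\cdots x_{n-1}-1\le n-1$, that is $x_1x_2\cdots x_{n-1}\le n$. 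On the other hand, $x_i\ge 2$ for all $i$ gives $x_1x_2\cdots x_{n-1}\ge 2^{n-1}$, so we would need $2^{n-1}\le n$.

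The only real content is checking that $2^{n-1}>n$ for every $n\ge 3$, which is the main (very mild) obstacle: I would dispose of it by noting $2^{2}=4>3$ and then arguing by induction, since $2^{n}=2\cdot 2^{n-1}>2n\ge n+1$ once $n\ge 1$. This contradiction forces $x_1=1$, completing the proof. The key feature worth emphasising is that the identity $x_n=(x_1+\cdots+x_{n-1})/(x_1\cdots x_{n-1}-1)$ and its associated divisibility condition are precisely the building blocks that Section~3 will exploit to recursively generate $S_r(n)$ from $S_{r-1}(\cdot)$.
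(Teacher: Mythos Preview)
Your proof is correct and close in spirit to the paper's, with minor differences in each part. For the two-variable case the paper argues by symmetry: from $x_1\mid(x_1+x_2)$ one gets $x_1\mid x_2$, hence $x_1\le x_2$, and likewise $x_2\le x_1$, so $x_1=x_2$ and then $x_1^2=2x_1$ forces $x_1=2$; your route via $(x_2-1)\mid 1$ is equally short. For $n\ge 3$ the paper does \emph{not} isolate $x_n$ but compares product and sum directly:
\[
x_1x_2\cdots x_n\ \ge\ 2^{n-1}x_n\ >\ nx_n\ \ge\ x_1+x_2+\cdots+x_n,
\]
invoking the same inequality $2^{n-1}>n$ you use. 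Your detour through the identity $x_n=(x_1+\cdots+x_{n-1})/(x_1\cdots x_{n-1}-1)$ costs one extra step but has the virtue of displaying explicitly the ratio formula on which Section~3 is built; the paper defers writing this identity until Step~1 there.
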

\begin{proof}
Suppose $x_{1}x_{2}=x_{1}+x_{2}$, for some positive integers $x_{1},x_{2}$. Then
 as $x_{1}|(x_{1}+x_{2})$, observe $x_{1}|x_{2}$ and so $x_{1}\leq x_{2}$. Similarly
 $x_{2}\leq x_{1}$. Consequently $x_{1}=x_{2}$ and so $x_{1}^{2}=2x_{1}$ which yields
 that either $x_{1}=2$ or $x_{1}=0$, but as $x_{1}$ is a positive integer we conclude
 $x_{1}=x_{2}=2$. Next we show that apart from the two-variable case, the ESP-Problem
 in $n\geq 3$ variables cannot have solutions $(x_{1},x_{2},\ldots ,x_{n})$ in which
 $x_{i}\geq 2$ for all $1\leq i\leq n$. This can be proved using the following argument:
 Assuming $x_{i}\geq 2$ for all $1\leq i\leq n$ and recalling that 
 $x_{i}\leq x_{i+1}$, observe that $x_{1}x_{2}\cdots x_{n}\geq 2^{n-1}x_{n}$
 while $n x_{n} \geq x_{1}+x_{2}+\ldots +x_{n}$, but as $2^{n-1}>n$ for $n\geq 3$ we deduce
 that $x_{1}x_{2}\cdots x_{n}> x_{1}+x_{2}+\ldots +x_{n}$, and so the $n$-tuple $(x_{1},x_{2},\ldots,x_{n})$ 
 cannot be a solution of (\ref{eq:2ex}).
\end{proof}\\
In view of the previous result we introduce, for notational convenience, the following
 definition for classifying solutions of the ESP-Problem in $n$ variables, according to the number of non-unit
 components present within the $n$-tuple.
\begin{defy}
For given integers $n\geq r\geq 2$, let $(x_{1},x_{2},\ldots ,x_{r} ;n-r)$ denote an
 $n$-tuple having $r$ non-unit components and satisfying equation (\ref{eq:2ex}), and set
\[
S_{r}(n)=\{ (x_{1},x_{2},\ldots ,x_{r} ; n-r)\in \mathbb{N}^{n}: \prod_{j=1}^{r} x_{j} =\sum_{j=1}^{r} x_{j} + n-r\}\mbox{ .}
\]
\end{defy}
Clearly from Definition 2.1 we observe that $S_{2}(n)\not=\emptyset$, as the n-tuple  
 $(2,n;n-2)$, is an element of $S_{2}(n)$ for
 all $n\geq 2$. In the next result a characterization for the set $S_{2}(n)$ will
 be given in terms of the divisors of $n-1$.
\begin{lem}
For an integer $n\geq 2$, the set $S_{2}(n)$ has the following explicit form:
\begin{equation}
S_{2}(n)=\left\{ \left( d+1,\frac{n-1}{d}+1;n-2\right) :d| (n-1), d\leq\sqrt{n-1}\right\}.\label{eq:1ex}
\end{equation}
\end{lem}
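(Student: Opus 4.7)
My plan is to prove the set equality by the standard Simon-style factoring trick applied to the defining equation of $S_2(n)$, then show the ordering constraint $x_1 \le x_2$ corresponds exactly to the bound $d \le \sqrt{n-1}$.

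First, I would unpack Definition 2.1 in the case $r=2$: an element $(x_1,x_2;n-2)\in S_2(n)$ is a pair of integers with $x_1,x_2\ge 2$ and $x_1\le x_2$ satisfying
\begin{equation*}
x_1 x_2 = x_1 + x_2 + (n-2).
\end{equation*}
The key algebraic step is to rewrite this as $x_1 x_2 - x_1 - x_2 + 1 = n-1$, i.e.\
\begin{equation*}
(x_1-1)(x_2-1) = n-1.
\end{equation*}
Since $x_1,x_2\ge 2$, both factors on the left are positive integers, so setting $d:=x_1-1$ we obtain a positive divisor $d\mid (n-1)$ with $x_2-1=(n-1)/d$, giving the parametrisation $(x_1,x_2)=(d+1,(n-1)/d+1)$.

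Next I would translate the constraint $x_1\le x_2$ into a bound on $d$: it says $d+1\le (n-1)/d+1$, equivalently $d^2\le n-1$, i.e.\ $d\le \sqrt{n-1}$. This establishes the inclusion of the left-hand side of~(\ref{eq:1ex}) into the right-hand side.

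For the reverse inclusion I would simply verify that every candidate $(d+1,(n-1)/d+1;n-2)$ with $d\mid(n-1)$ and $d\le\sqrt{n-1}$ lies in $S_2(n)$: both components are integers $\ge 2$, the ordering $x_1\le x_2$ follows from $d\le\sqrt{n-1}$, and a direct expansion shows $(d+1)((n-1)/d+1)=(d+1)+((n-1)/d+1)+(n-2)$. No genuine obstacle is expected here; the whole lemma rests on recognising the factoring identity $(x_1-1)(x_2-1)=n-1$, after which everything is a routine bijection between divisors of $n-1$ up to $\sqrt{n-1}$ and ordered solutions.
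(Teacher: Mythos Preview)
Your proposal is correct and follows exactly the same route as the paper: both hinge on the factorisation $(x_{1}-1)(x_{2}-1)=n-1$ and the identification $d=x_{1}-1$. If anything, your argument is more complete, since you explicitly verify the reverse inclusion and the equivalence $x_{1}\le x_{2}\iff d\le\sqrt{n-1}$, whereas the paper leaves these implicit.
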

\begin{proof}
Suppose $(x_{1},x_{2};n-2)\in S_{2}(n)$. Then upon rearrangement of
$x_{1}+x_{2}+n-2=x_{1}x_{2}$, we find $n-1=(x_{1}-1)(x_{2}-1)$. Assuming
 $x_{1}\leq x_{2}$, observe that for a divisor $d$ of $n-1$ with $d\leq \sqrt{n-1}$,
 we may set $x_{1}-1=d$ and $x_{2}-1=(n-1)/d$ and so $S_{2}(n)$ is of the form as
 stated in (\ref{eq:1ex}).
\end{proof}\\
A variation of the following result was proved in \cite{ecker},\cite{kurland}, and gives a lower
 bound on $r$ that insures $S_{r}(n)=\emptyset$. We present here for completeness,
 the proof which uses the fact that any solution $(x_{1},x_{2},\ldots ,x_{r};n-r)\in S_{r}(n)$
 has the property that the common equal-sum-product value is at most $2n$, with equality
 holding if and only if when $r=2$ and  $(x_{1},x_{2},\ldots ,x_{r};n-r)=(2,n;n-2)$.
 This result incidentally appeared in the form of a problem for the Polish Mathematical
 Olympiad in 1990 (see \cite{kurland}).
\begin{lem}
For an integer $n\geq 2$, if $r>\lfloor \log_{2}(n)\rfloor +1$, then $S_{r}(n)=\emptyset$.
\end{lem}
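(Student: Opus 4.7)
The plan is to prove the contrapositive: assume $S_r(n)\neq\emptyset$ and conclude $r\leq\lfloor\log_2 n\rfloor+1$. The key idea is to balance the exponential growth of $\prod x_i$ against the linear growth of $\sum x_i$ by pivoting on the largest non-unit component $x_r$. I would take an arbitrary $(x_1,\ldots,x_r;n-r)\in S_r(n)$ and, using $2\leq x_1\leq\cdots\leq x_r$, derive the two estimates
\[
\sum_{i=1}^r x_i\leq r\,x_r,\qquad \prod_{i=1}^r x_i \;=\; x_r\prod_{i<r}x_i \;\geq\; 2^{r-1}\,x_r.
\]
Substituting both into the defining identity $\prod x_i = \sum x_i+(n-r)$ yields $2^{r-1}x_r\leq r\,x_r+(n-r)$, that is, $(2^{r-1}-r)x_r\leq n-r$.

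For $r\geq 3$ the coefficient $2^{r-1}-r$ is strictly positive, so the bound $x_r\geq 2$ gives $2(2^{r-1}-r)\leq n-r$, which rearranges to $2^r\leq n+r$. I would then invoke the elementary inequality $2^r-r\geq 2^{r-1}$ (equivalent to $2^{r-1}\geq r$, which holds for every $r\geq 2$) to conclude $2^{r-1}\leq n$. Hence $r-1\leq\log_2 n$, and since $r$ is an integer this gives $r\leq\lfloor\log_2 n\rfloor+1$. The edge case $r=2$ is vacuous: combined with $r>\lfloor\log_2 n\rfloor+1$ it would force $\lfloor\log_2 n\rfloor<1$ and hence $n<2$, contradicting the hypothesis $n\geq 2$.

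The main obstacle is conceptual rather than computational: one must recognize that the correct pivot is the maximal component $x_r$, which simultaneously supplies a tight upper bound on $\sum x_i$ (via $x_i\leq x_r$) and, through the remaining $r-1$ factors each being at least $2$, a tight lower bound on $\prod x_i$. The passage preceding the lemma suggests an alternative route via the auxiliary inequality $\prod x_i\leq 2n$, after which $\prod x_i\geq 2^r$ gives $2^r\leq 2n$; the direct argument above reaches the same conclusion without first having to establish that intermediate estimate.
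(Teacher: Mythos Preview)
Your proof is correct and reaches the same terminal inequality $2^{r-1}\leq n$ as the paper, but by a different route. The paper's argument is short because it outsources the main estimate: it invokes the result from \cite{ecker} that the common equal-sum-product value is at most $2n$, and then simply chains $2^{r}\leq\prod_{l}x_{l}\leq 2n$. Your argument is self-contained: by pivoting on the largest component $x_{r}$ you obtain $(2^{r-1}-r)x_{r}\leq n-r$ directly from the defining identity, and then squeeze out $2^{r-1}\leq n$ with the elementary bound $2^{r-1}\geq r$. This is essentially the same device the paper uses in the proof of Lemma~2.1 (bounding $\prod x_{i}$ below by $2^{r-1}x_{r}$ and $\sum x_{i}$ above by $rx_{r}$), so you have in effect recycled that idea to make Lemma~2.3 independent of the external citation. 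What the paper's approach buys is brevity; what yours buys is that the reader need not look up \cite{ecker} to verify the $2n$ bound. Your handling of the boundary case $r=2$ is also clean.
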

\begin{proof}
Suppose $S_{r}(n)\not=\emptyset$, for some $n\geq 2$. Now in \cite{ecker}, it was shown
 that any $(x_{1},x_{2},\ldots ,x_{r};n-r)\in S_{r}(n)$ has the property that the
 common equal-sum-product value is bounded above by $2n$. Consequently as
 $2^{r}\leq \prod_{l=1}^{r}x_{l}=x_{1}+\cdots x_{r}+n-r\leq 2n$ we deduce that
 $2^{r-1}\leq n$, from which one finds $r\leq \log_{2}(n)+1$. As $r$ is an integer
 the previous inequality implies $r\leq \lfloor\log_{2}(n)\rfloor +1$.
\end{proof}
\\
For any integer $n>2$, as the number of non-unit components in a solution
 $(x_{1},x_{2},\ldots ,x_{r};n-r)\in S(n)$ is equal to $r\geq 2$, we are guaranteed
 that at least two integers say $x',x''\in \{x_{1},x_{2},\ldots ,x_{r}\}$ are such
 that $x''\geq x'\geq 2$. Now if $x''$ is the largest non-unit component of a solution
 $(x_{1},x_{2},\ldots ,x_{r};n-r)\in S(n)$, then from the above upper bound for
 the common equal-sum-product value, we find $2x''\leq \prod_{i=1}^{r}x_{i}\leq 2n$,
 that is $x''\leq n$. Consequently each of the non-unit components in a solution
 $(x_{1},x_{2},\ldots , x_{r};n-r)\in S_{r}(n)$ can only assume the $n-1$ integer
 values in the set $\{ 2,\ldots ,n\}$, and
 so 
 $|S_{r}(n)| \leq (n-1)^{r}$.
 Thus one deduces from Lemma 2.3 that the reduced search space for the
 ESP-Problem in $n$ variables is
 $(n-1)^{2}+(n-1)^{3}+\cdots +(n-1)^{\lfloor \log_{2}(n)\rfloor +1}=O((n-1)^{\lfloor \log_{2}(n)\rfloor+1})$.
 In view of this bound, we can see that an exhaustive search for solutions to the
 ESP-Problem is impractical for large $n$, and hence the need for an algorithmic solution
 to this problem. \\
\\
As shall be seen in Section 3, one of the main operations performed by the algorithm
 to recursively generate the solution sets $S_{r}(n)$, will be to search previously
 constructed and stored sets of the form $S_{r-1}(\cdot )$, for the eventual purpose
 of applying a divisibility test. To make this searching operation practicable, we
 first impose on the solution sets $S_{r}(n)$ a partial order defined as
 $S_{r_{1}}(n_{1})<S_{r_{2}}(n_{2})$ when $r_{1}<r_{2}$ in the case $n_{1}=n_{2}$, while
 in the case $n_{1}\not= n_{2}$ then $S_{r_{1}}(n_{1})<S_{r_{2}}(n_{2})$ when
 $n_{1}<n_{2}$. This partial ordering can best be summarized using the Iverson bracket
 notation (see \cite[p.24]{graham}) as follows
\begin{equation}
[S_{r_{1}}(n_{1})<S_{r_{2}}(n_{2})]=[n_{1}=n_{2}][r_{1}<r_{2}] + [n_{1}<n_{2}]\mbox{ ,}\label{parord}
\end{equation}
where the square bracket $[P]$ evaluates to 1 if the statement $P$ is true and 0
 otherwise. Note from the definition of the partial ordering,
 $S_{r_{1}}(n_{1})=S_{r_{2}}(n_{2})$ if and only if $r_{1}=r_{2}$ and $n_{1}=n_{2}$.
 In addition to the partial order, the second device we shall employ to facilitate
 the searching operation of the algorithm, will be the storage of the solution sets
 $S_{r}(n)$ as nodes in a Binary Search Tree. A Binary Search Tree, denoted $T$, is
 an abstract data structure used for the storage, retrieval and deletion of nodes
 representing the elements of a finite set $S$, on which a partial order is defined.
 More precisely a Binary Search Tree can be defined via the following definition:
\begin{defy}
A Binary Search Tree (BST) is a data structure $T$ for a finite partially ordered
 set $S$, and is defined as follows: If $S=\emptyset$ then $T$ is a null tree, otherwise
 $T$ consists of a node storing some element $x\in S$ and two Binary Search Trees such that:\\
\\
(1) $Left(T)$ stores elements of $S$ which are less than $x$.\\
\\
(2) $Right(T)$ stores elements of $S$ which are greater than $x$.\\
\\
The node storing $x$ is called the root node of the tree.
\end{defy}
To illustrate the concept of a Binary Search Tree and how it will be used for the
 storage of the solution sets $S_{r}(n)$ during the execution of the algorithm, consider
 the following set $S=\{ S_{2}(2),S_{3}(4),S_{3}(5),S_{3}(6),S_{2}(7),S_{4}(15)\}$,
 whose elements have been placed in ascending order, as specified by the partial
 ordering in (\ref{parord}). If for example we select the element $S_{3}(5)\in S$ as the root node of
 the BST, then $Left(T)$ would be a BST consisting of nodes contained in the subset
 $\{ S_{2}(2),S_{3}(4)\}$, while $Right(T)$ would also be a BST consisting of nodes
 contained in the remaining subset $\{ S_{3}(6),S_{2}(7),S_{4}(15)\}$. By further selecting
 $S_{2}(2)$ and $S_{2}(7)$ as the root nodes of $Left(T)$ and $Right(T)$ respectively,
 then the elements of $S$ can now be stored within the BST pictured in Figure 1. \\
\\
Defining the height of a BST as the total number of levels below the root node, observe
 that the tree structure in Figure 1 has a height of $2$. Clearly from Definition
 2.2 one can see there are many possible BST that can be constructed to store the
 elements, as nodes, of a finite partially ordered set $S$. By using standard algorithms
 for the transformation of BST, (see \cite[pp. 458-481]{knuth}), a BST can have it's height reduced,
 while maintaining the original partial ordering of the set $S$. A BST is said to
 be balanced when it's height is reduced to a minimum. One major advantage of a
 balanced BST having $n$ nodes, is that the operations of searching and insertion
 of a node can be performed in $O(\log n)$ time in the worst case, compared with
 a linear list where the same operations take $O(n)$ time in the worst case. This
 efficiency is the reason for our use of such a data structure in the implementation
 of the algorithm in Section 3.
\\
\begin{figure}[htb]
\centering
{\scalefont{\tikzpicturefont}
\begin{tikzpicture}[]
\node [circle,draw] (z){$ S_{3}(5) $}
  child 
  {
    node [circle,draw] (a) {$ S_{2}(2) $}
    child 
    { 
      [draw=white]
      node [draw=white] (b) {$  $}
    }
    child 
    { 
      node [circle,draw] (c) { $ S_{3}(4) $ }
    }
  }
  child 
  {
    node [circle,draw] (d) { $ S_{2}(7) $ } 
    child 
    {
      node [circle,draw] (e) { $ S_{3}(6) $ }
    }
    child 
    {
      node [circle,draw] (f) { $S_{4}(15)$ }
    }
  };
\end{tikzpicture}
}
\caption*{Figure 1}
\end{figure}
\section{Algorithm Construction and Implementation}
We now come to the heart of the paper where an algorithm will be constructed
 which recursively generates all solutions
 of the ESP-Problem in $n$ variables, and where an examination of the algorithms
 implementation shall also be given. To begin, in view of Lemma 2.3, we can readily
 deduce that the complete solution set of the ESP-Problem in $n$ variables is given by
\[
S(n)=\bigcup_{r=2}^{m+1}S_{r}(n)\mbox{ ,}
\]
where $m=\lfloor \log_{2} n\rfloor$. Applying (\ref{eq:1ex}) of Lemma 2.2 it is a
 theoretically straightforward procedure to construct the set $S_{2}(n)$ for any
 integer $n\geq 2$. However in what follows we will first describe a general process
 by which each set $S_{r}(n)$, for $r=3,\ldots ,m+1$ and $n\geq 3$, can be recursively
 generated from the elements of the sets $S_{r-1}(r+j)$, for values of $j$ to be
 specified shortly. This process, established in Step 1, will form the basis for
 the construction of Algorithm 3.1 described in Step 2 (Part A), while in Step 2
 (Part B) the implementation of Algorithm 3.1 using balanced BST will be examined,
 and shall lead to a working program in the form of a javascript web page, (see \cite{webpage}),
 for the generation of all solutions to the ESP-Problem in $n>2$ variables.\\
\\
{\bf Step 1: Recursive generation of $S_{r}(n)$.}\\
Suppose $S_{r}(n)\not=\emptyset$ for $r\geq 3$, then there must exist $r$ integers
 greater than unity and such that $x_{1}x_{2}\ldots x_{r}=x_{1}+x_{2}+\ldots +x_{r}+n-r$.
 Solving for $x_{r}$ we find that
\begin{equation}
x_{r}=\frac{x_{1}+x_{2}+\cdots +x_{r-1}+n-r}{x_{1}x_{2}\ldots x_{r-1}-1}\in\mathbb{N}\mbox{ ,}\label{eq:1}
\end{equation}
and so $(x_{1}x_{2}\cdots x_{r-1}-1)|(x_{1}+x_{2}+\ldots + x_{r-1}+n-r)$. Now as
 $x_{r}\geq 2$, we can conclude that $(x_{1}x_{2}\cdots x_{r-1}-1)<x_{1}+x_{2}+\cdots +x_{r-1}+n-r$
 and so, there must exist a $k\in\mathbb{N}$ such that
 $x_{1}x_{2}\cdots x_{r-1}-1=(x_{1}+x_{2}+\cdots +x_{r-1}+n-r)-k$, or equivalently
 after setting ${\bf j=n-r-k}$
\begin{equation}
x_{1}x_{2}\cdots x_{r-1}=x_{1}+x_{2}+\cdots +x_{r-1}+(j+1)\mbox{ ,}\label{eq:2}
\end{equation}
with $j+1\geq 0$ by Lemma 2.1. Thus
 $(x_{1},x_{2},\ldots ,x_{r-1};j+1)\in S_{r-1}(r+j)$. Conversely assume
 $(x_{1},x_{2},\ldots ,x_{r-1};j+1)\in S_{r-1}(r+j)$, with ${\bf r+j<n}$ and is
 such that $x_{r}:=(\sum_{l=1}^{r-1}x_{l}+n-r)/(\prod_{l=1}^{r-1}x_{l}-1)\in\mathbb{N}$,
 then by definition of $x_{r}$ we have $(x_{1},x_{2},\ldots ,x_{r-1},x_{r};n-r)\in S_{r}(n)$,
 noting here that such an $x_{r}\geq 2$, since by assumption
 $\prod_{l=1}^{r-1}x_{l}=\sum_{l=1}^{r-1}x_{l}+j+1$ we have
\begin{eqnarray*}
x_{r}:=\frac{\sum_{l=1}^{r-1}x_{l}+n-r}{\prod_{l=1}^{r-1}x_{l}-1} = \frac{\sum_{l=1}^{r-1}x_{l}+n-r}{\sum_{l=1}^{r-1}x_{l}+j} & = & \frac{(\sum_{l=1}^{r-1}x_{l}+j)+n-r-j}{\sum_{l=1}^{r-1}x_{l}+j} \nonumber\\
& = & 1+\frac{n-r-j}{\sum_{l=1}^{r-1}x_{l}+j}>1\mbox{ .}\\
\end{eqnarray*}
 Thus to generate $S_{r}(n)$, it is necessary and sufficient to find solutions
 $(x_{1},x_{2},\ldots ,x_{r-1};j+1)\in S_{r-1}(r+j)$, with $j+1 \geq 0$, satisfying the divisibility
 condition in (\ref{eq:1}), and construct $S_{r}(n)$ as the set containing elements
 of the form $(x_{1},x_{2}\ldots ,x_{r-1},x_{r};n-r)$, with $x_{r}$ defined as in
 (\ref{eq:1}). We next give upper and lower bound on $j$ necessary for both
 $S_{r-1}(r+j)\not=\emptyset$ and have elements that satisfy the divisibility condition
 in (\ref{eq:1}). Recall from Lemma 2.3 that for $S_{r}(n)\not=\emptyset$ necessarily
 $r\leq\lfloor\log_{2}(n)\rfloor +1$, consequently if $S_{r-1}(r+j)\not=\emptyset$,
 then one finds $r-1\leq\lfloor\log_{2}(r+j)\rfloor +1\leq\log_{2}(r+j)+1$, which upon
 rearrangement yields that
\begin{equation}
2^{r-2}-r\leq j\mbox{ .}\label{eq:3}
\end{equation}
Recalling from above that if $(x_{1},x_{2},\dots ,x_{r-1};j+1)\in S_{r-1}(r+j)$ we
 again have
\begin{equation}
\frac{\sum_{l=1}^{r-1}x_{l}+n-r}{\prod_{l=1}^{r-1}x_{l}-1} =1+\frac{n-r-j}{\sum_{l=1}^{r-1}x_{l}+j}\mbox{ .}\label{eq:4}
\end{equation}
Now the right-hand side of (\ref{eq:4}) will not be an integer if $n-r-j<\sum_{l=1}^{r-1}x_{l}+j$,
 but as $x_{l}\geq 2$, further observe that the previous inequality will be satisfied if and only if
$n-r-j<2(r-1)+j$, that is when
\begin{equation}
j>\frac{n-3r+2}{2}\Leftrightarrow j>\left\lfloor \frac{n-3r+2}{2}\right\rfloor\mbox{ ,}\label{eq:5}
\end{equation}
as $j\in\mathbb{N}$. Thus for $S_{r-1}(r+j)\not=\emptyset$ and to have elements
 satisfying the divisibility condition of (\ref{eq:1}), we must necessarily have
\begin{equation}
2^{r-2}-r\leq j\leq \left\lfloor\frac{n-3r+2}{2}\right\rfloor\mbox{ ,}\label{eq:6}
\end{equation}
assuming that  $2^{r-2}-r\leq\left\lfloor\frac{n-3r+2}{2}\right\rfloor$.\\
\\
{\bf NB:}\,\, 
  If $j \geq 2^{r-2}-r \geq \lfloor \frac{n-3r+2}{2} \rfloor$ then 
 no element in $S_{r-1}(r+j)$ can satisfy the divisibility condition in (\ref{eq:1}) and so $S_{r}(n) = \emptyset $. \\
\\
Thus excluding the possible case of
 $j\geq 2^{r-2}-r>\left\lfloor\frac{n-3r+2}{2}\right\rfloor$ in which
 $S_{r}(n)=\emptyset$, we can conclude that to construct each $S_{r}(n)$,
 for $r=3,\ldots ,m+1$ and $n\geq 3$, it suffices to examine the elements
 $(x_{1},x_{2},\ldots, x_{r-1};j+1)\in S_{r-1}(r+j)$, for the values of $j$ in
 (\ref{eq:6}) which satisfy the divisibility condition
\[
w=1+\frac{n-r-j}{\sum_{l=1}^{r-1}x_{l}+j}\in\mathbb{N}\mbox{ ,}
\]
and then construct $S_{r}(n)$ from the set of elements of the form
 $(x_{1},x_{2},\ldots ,x_{r-1},w;n-r)$.\\
\\
{\bf Step 2 Part A: Construction of Algorithm}\\
We now formalize the recursive generation of the sets $S_{r}(n)$ in Step 1, into
 the pseudo code of Algorithm 3.1. To assemble the required algorithm recall that
 the solution set of the ESP-Problem in $n$ variables is $S(n)=\bigcup_{r=2}^{m+1}S_{r}(n)$, 
 in which
 the set $S_{2}(n)$ is explicitly given in (\ref{eq:1ex}), while from Step 1 each
 set $S_{r}(n)$ for $r=3,\ldots ,m+1$ is of the form
\begin{equation}
\bigcup_{j=2^{r-2}-r}^{\lfloor\frac{n-3r+2}{2}\rfloor}\{ (x_{1},x_{2},\ldots ,x_{r-1},w;n-r):(x_{1},x_{2},\ldots ,x_{r-1};j+1)\in S_{r-1}(r+j),w=1+\frac{n-r-j}{\sum_{l=1}^{r-1}x_{l}+j}\in\mathbb{N}\}\mbox{ ,}\label{eq:7}
\end{equation}
provided $\lfloor\frac{n-3r+2}{2}\rfloor\geq 2^{r-2}-r$, with $S_{r}(n)=\emptyset$
 otherwise. Now to affect the construction of each set $S_{r}(n)$ via (\ref{eq:7}),
 we define a recursive procedure denoted $calc\_shell(n,r)$ in a typical programming
 language such as C++. Introducing a generic data structure, denoted $T$, and initially
 assigned $T:=\emptyset$, for storing and accessing all previously constructed sets
 $S_{r-1}(j+r)$ for $2^{r-2}-r\leq j\leq \lfloor\frac{n-3r+2}{2}\rfloor$, the procedure
 $calc\_shell(n,r)$ will first test if the set $S_{r}(n)$ is already contained in
 $T$ and exist if ($S_{r}(n)\in T$) is true.\\
 \\
 When ($S_{r}(n)\in T$) is false, then after initially assigning $S_{r}(n):=\emptyset$,
 if $r=2$ construct $S_{r}(n)$ using the explicit formulation in (\ref{eq:1ex}) and
 insert $S_{r}(n)$ into $T$ via the assignment $T:=T\cup S_{r}(n)$ and exit. However
 if $r>2$, then the procedure $calc\_shell(n,r)$, as dictated by (\ref{eq:7}), enters
 into a {\scriptsize FOR} Loop indexed by $j=\lfloor\frac{n-3r+2}{2}\rfloor...2^{r-2}-r$, which either
 terminates if $2^{r-2}-r>\lfloor\frac{n-3r+2}{2}\rfloor$ resulting in the assignment
 $S_{r}(n):=\emptyset$ or proceeds for each $j$ to access the elements of the sets
 $S_{r-1}(r+j)$ from $calc\_shell(j+r,r-1)$ if non-empty, and tests the divisibility
 condition $w=1+\frac{n-r-j}{\sum_{l=1}^{r-1}x_{l}+j}\in\mathbb{N}$, which if true inserts
 the solution $(x_{1},x_{2},\ldots ,x_{r-1},w;n-r)$ into the set $S_{r}(n)$ 
 or
 otherwise proceeds to the next value of the index $j$. Once this {\scriptsize FOR} Loop is completed
 the resulting set $S_{r}(n)$ is inserted into the data structure $T$ via the assignment
 $T:=T\cup S_{r}(n)$. Finally to construct the complete solution set $S(n)$ of the
 ESP-Problem, each recursive procedure $calc\_shell(n,i)$ is executed within the procedure
 $cal\_solution(n)$, which consists of a {\scriptsize FOR} Loop indexed by $i=m+1...2$. The above
 algorithm assembly is summarized in pseudo code as follows:\\
\\
{\bf Algorithm 3.1}\\
\\
{\bf Initialization:} $ T := \emptyset $ \\
 \\
Construct the set $S_{r}(k)$\\
$ procedure \hspace*{2mm} calc\_shell(k,r)$\\
  \mbox{ }\hspace{1cm}If $S_{r}(k) \in T $ then\\
  \mbox{ }\hspace{1cm} exit
  \\
  \mbox{ }\hspace{1cm}$S_{r}(k):=\emptyset$ \\
  \mbox{ }\hspace{1cm}If $r = 2 $ then
 $S_{2}(k):=\left\{ \left(\frac{k-1}{d}+1,d+1;k-2\right) :d| (k-1), d\leq\sqrt{k-1}\right\}$ \\
    \mbox{ }\hspace{2cm}$ T := T \cup  S_{2}(k) $ \\
    \mbox{ }\hspace{2cm}exit \\
  \mbox{ }\hspace{1cm}For $ j = \lfloor\frac{k-3r+2}{2}\rfloor \ldots 2^{r-2}-r $ \\
    \mbox{ }\hspace{2cm}If $2^{r-2}-r>\lfloor\frac{k-3r+2}{2}\rfloor$
 then $S_{r}(k):=\emptyset\,\,\mbox{and}\,\, T:=T\cup S_{r}(k)$\\
      \mbox{ }\hspace{2cm}exit \\
    \mbox{ }\hspace{2cm} call $calc\_shell(j+r,r-1) $ \\
    \mbox{ }\hspace{2cm} For each element
 $(x_{1},x_{2}\ldots ,x_{r-1};j+1)\in S_{r-1}(j+r)$ such that \\
\[
w=1+\frac{k-r-j}{\sum_{l=1}^{r-1}x_{l} +j}\in\mathbb{N}
\]
      \mbox{ }\hspace{3cm} then $S_{r}(k):=S_{r}(k)\cup\{ (x_{1},x_{2},\ldots ,x_{r-1},w;k-r)\}$\\
  \mbox{ }\hspace{1cm}$ T := T \cup  S_{r}(k) $ \\
 \\
Construct the solution set $S(n)$\\
$ procedure \hspace*{2mm} calc\_solution(n)$\\
  \mbox{ }\hspace{1cm}For $ i= \lfloor\log_{2}(n)\rfloor +1 \ldots 2 $ \\
    \mbox{ }\hspace{2cm} $ calc\_shell(n,i) $ \\
\mbox{}\\
{\bf Step 2 Part B: Implementation of Algorithm}\\
To illustrate both the function of Algorithm 3.1 and 
 the need for a balanced
 Binary Search Trees (BST) in it's implementation, we examine first the construction
 of the solution set for the ESP-Problem in the case of $n=15$ as follows.\\
\\
{\bf Example 3.1:} As $2^{3}<15<2^{4}$ we have $m=\lfloor\log_{2}15\rfloor =3$ and
 so $S(15)=\cup_{r=2}^{4}S_{r}(15)$. After initializing $T:=\emptyset$, Algorithm
 3.1 executes each procedure $calc\_shell(15,i)$ in descending order of $i=4,3,2$,
 to construct $S_{i}(15)$, initially assigned as empty. Beginning with $calc\_shell(15,4)$,
 this procedure must first call on each $calc\_shell(j+4,3)$ in descending order of
 $j=2,1,0$, for the construction of $S_{3}(6)$,$S_{3}(5)$,$S_{3}(4)$ respectively,
 before invoking the divisibility test
\begin{equation}
(x_{1},x_{2},x_{3};j+1)\in S_{3}(j+4)\,\,\mbox{is such that}\,\, w=1+\frac{11-j}{\sum_{l=1}^{3}x_{l}+j}\in\mathbb{N}\mbox{ ,}\label{eq:8}
\end{equation}
which if true results in the insertion $(x_{1},x_{2},x_{3},w;11)\in S_{4}(15)$,
 or else leaves $S_{4}(15):=\emptyset$. At this stage we similarly find that the
 execution of each $calc\_shell(k,3)$ for $k=6,5$ must both call on
 $calc\_shell(j+3,2)$ for $j=-1$ to produce, via (\ref{eq:1ex}), the set
 $S_{2}(2)=\{ (2,2;0)\}$, before invoking the divisibility test
\begin{equation}
(2,2;0)\in S_{2}(2)\,\,\mbox{ is such that}\,\, w=1+\frac{k-3+1}{4-1}\in\mathbb{N}\mbox{ ,}\label{eq:9}
\end{equation}
which if true results in the insertion $(2,2,w;k-3)\in S_{3}(k)$ or else leaves
 $S_{3}(k):=\emptyset$. Consequently upon substituting $k=6,5$ into (\ref{eq:9})
 we deduce that $S_{3}(6)=\emptyset$ while $S_{3}(5)=\{ (2,2,2;2)\}$. However the
 execution of $calc\_shell(4,3)$ must result in $S_{3}(4):=\emptyset$ as
 $2^{r-2}-r>\lfloor\frac{k-3r+2}{2}\rfloor$ for $(k,r)=(4,3)$. Thus by substituting
 the only solution $(2,2,2;2)=(x_{1},x_{2},x_{3};j+1)\in S_{3}(j+4)$, corresponding
 to $j=1$, into (\ref{eq:8}), we find $w=1+\frac{10}{7}\not\in\mathbb{N}$ and so
 $S_{4}(15):=\emptyset$. Thus $T$ remains empty.\\
\\
In like manner the execution of $calc\_shell(15,3)$, to construct $S_{3}(15)$,
 will call on $calc\_shell(j+3,2)$ in descending order of $j=4,3,2,1,0,-1$, and 
produces again via (\ref{eq:1ex}) the sets
 $S_{2}(7)=\{(7,2;5),(4,3;5)\},S_{2}(6)=\{(6,2;4)\},S_{2}(5)=\{(5,2;3),(3,3;3)\},S_{2}(4)=\{(4,2;2)\},S_{2}(3)=\{(3,2;1)\},S_{2}(2)=\{ (2,2;0)\}$,
 which after substituting into the divisibility test
\[
(x_{1},x_{2};j+1)\in S_{2}(j+3)\,\,\mbox{is such that}\,\, w=1+\frac{12-j}{\sum_{l=1}^{2}x_{l}-j}\in\mathbb{N}\mbox{ ,}
\]
yields that $S_{3}(15)=\emptyset$, as for each corresponding value of $j$ we find
 $w\not\in\mathbb{N}$, and so $T$ remains empty. Finally the execution of $calc\_shell(15,2)$
 produces via (\ref{eq:1ex}) the only non-empty set, namely
$S_{2}(15)=\{ (15,2;13), (8,3;13)\}$ for insertion into $T$. Thus we conclude
 $S(15)=\{ (15,2;13), (8,3;13)\}$.\\
\\
From Example 3.1 we can see that even for a small value of $n=15$, prior to the
 construction of each set $S_{r}(n)$, for $r=\lfloor\log_{2}n\rfloor +1\ldots 3$,
 the procedure $calc\_shell(n,r)$ must first call on itself a number of times in
 descending procedural calls of the form $calc\_shell(j+r,r-1)$, for
 $j=\lfloor\frac{n-3r+2}{2}\rfloor \ldots 2^{r-2}-r$, in order to construct the
 sets $S_{r-1}(r+j)$. However before these later sets can be constructed, a similar
 process of descending procedural calls must be repeatedly applied until we reach
 the construction of the ``base'' sets namely $S_{2}(\cdot)$ 
 given explicitly in (\ref{eq:1ex}). Once these base sets have been determined,
 then by applying the respective divisibility tests in the reverse order to the
 sequence of procedural calls described above, one can finally construct each set
 $S_{r}(n)$, and obtain the complete solution set $S(n)$. \\
\\
Clearly, the first problem we must overcome to achieve a practicable implementation
 of Algorithm 3.1, is to efficiently store and retrieve solutions within the respective
 sequence of intermediary sets that are needed for the recursive generation of each
 set $S_{r}(n)$, for $r=\lfloor\log_{2}n\rfloor +1\ldots 3$. Moreover, as illustrated
 by Example 3.1, we should avoid the unnecessary reconstruction of existing intermediary
 sets, which may arise when constructing the sets $S_{r}(n)$ for lower values of $r$.
 Finally one will also require a practical method for determining the base sets
 $S_{2}(\cdot )$ using (\ref{eq:1ex}). This problem will be addressed last, 
 and in the interim we shall
 assume that one can readily construct the set $S_{2}(n)$ for any integer $n>2$.\\
\\
Beginning with the problem of storage and retrieval, recall from Section 2 that a
 finite collection of solution sets such as
 $S=\{ S_{2}(2),S_{3}(4),S_{3}(5),S_{3}(6),S_{2}(7),S_{4}(15)\}$, can be inserted
 as nodes in a BST which preserves the partial
 ordering in (\ref{parord}). When applying Algorithm 3.1 after tracking the execution
 of all descending procedural calls required in the construction of each set $S_{r}(n)$,
 we first determine a specific collection of base sets $S_{2}(\cdot)$ which are
 constructed in descending order with respect to (\ref{parord}). Once determined
 these base sets can then be inserted into an evolving BST, in which their placement
 into the left or right subtree will depend upon their relative ordering via (\ref{parord}) with respect
 to the current root node. Upon re-balancing, the process of retrieving the required
 solutions for applying the respective divisibility tests can then be executed, and
 the resulting intermediary sets, whether empty or not, can be progressively inserted
 into an evolving BST, culminating with the insertion of each set $S_{r}(n)$. Once all
 the sets $S_{r}(n)$ have been inserted into the final balanced BST, the complete
 solution set $S(n)$ can then be formed by retrieval of the elements from 
 those sets $S_{r}(n)$ which are non-empty.\\
\\
One advantage in using a balanced BST is that we can search through $m$ nodes in
 $O(\log (m))$ time, thus making efficient the process of searching and retrieval.
 This efficiency can further be exploited to effectively search for pre-existing
 intermediary sets, thus avoiding unnecessary reconstruction. A secondary advantage
 in using BST is that these data structures are readily implementable in a programming
 language such as C++, where the operations of storing, searching and retrieval of
 nodes can be performed using existing Standard Template Libraries supported within
 C++. We now illustrate the process outlined above by displaying a subsequence
 of evolving balanced BST leading to the construction of $S(15)$.\\
\\
From Example 3.1 recall the algorithm began with the construction of $S_{4}(15)$ 
by first executing three descending procedural calls, to construct the intermediary sets 
 $S_{3}(6)$,$S_{3}(5)$,$S_{3}(4)$, which in turn could not be determined until 
 the base set $S_{2}(2)$ was constructed. Once this is achieved, the sequence of 
 insertions into a BST can then begin with $S_{2}(2)$ chosen as the initial root 
 node followed by the insertion of the set $S_{3}(6)$ onto the right of $S_{2}(2)$. 
 Next after inserting $S_{3}(5)$ to the right of $S_{3}(6)$ and then re-balancing, 
 the root node of the new BST then becomes $S_{3}(5)$. After insertion of the remaining 
 set $S_{3}(4)$ to the right of $S_{2}(2)$, the BST can now be searched for solutions 
 used in the divisibility test to construct $S_{4}(15)$, which is then placed to 
 the right of $S_{3}(6)$ in the BST. This initial sequence of evolving BST is illustrated 
 below in Figure 2.\\
\begin{figure}[htb]
\centering
{\scalefont{\tikzpicturefont}
\begin{tikzpicture}[] 
\node [circle,draw] (z){$ S_{2}(2) $}
  child 
  {
    [draw=white]
    node [draw=white] (a) {$ $}
    child 
    { 
      [draw=white]
      node [draw=white] (b) {$  $}
    }
  };
\end{tikzpicture}
}
{\scalefont{\tikzpicturefont}
\begin{tikzpicture}[] 
\node [circle,draw] (z){$ S_{2}(2) $}
  child 
  {
    [draw=white]
    node [draw=white] (a) {$ $}
  }
  child 
  {
    node [circle,draw] (b) {$ S_{3}(6) $}
  };
\end{tikzpicture}
}
{\scalefont{\tikzpicturefont}
\begin{tikzpicture}[] 
\node [circle,draw] (z){$ S_{3}(5) $}
  child 
  {
    node [circle,draw] (a) {$ S_{2}(2) $}
  }
  child 
  {
    node [circle,draw] (b) {$ S_{3}(6) $}
  };
\end{tikzpicture}
}
{\scalefont{\tikzpicturefont}
\begin{tikzpicture}[] 
\node [circle,draw] (z){$ S_{3}(5) $}
  child 
  {
    node [circle,draw] (a) {$ S_{2}(2) $}
    child 
    { 
      [draw=white]
      node [draw=white] (c) {$  $}
    }
    child 
    { 
      node [circle,draw] (d) { $ S_{3}(4) $ }
    }
  }
  child 
  {
    node [circle,draw] (b) {$ S_{3}(6) $}
  };
\end{tikzpicture}
}
{\scalefont{\tikzpicturefont}
\begin{tikzpicture}[]
\node [circle,draw] (z){$ S_{3}(5) $}
  child 
  {
    node [circle,draw] (a) {$ S_{2}(2) $}
    child 
    { 
      [draw=white]
      node [draw=white] (b) {$  $}
    }
    child 
    { 
      node [circle,draw] (g) { $ S_{3}(4) $ }
    }
  }
  child 
  {
    node [circle,draw] (j) { $ S_{3}(6) $ } 
    child 
    {
      [draw=white]
      node [draw=white] (k) {  }
    }
    child 
    {
      node [circle,draw] (l) { $ S_{4}(15) $ }
    }
  };
\end{tikzpicture}
}
{\scalefont{\tikzpicturefont}
\begin{tikzpicture}[]
\node [circle,draw] (z){$ S_{3}(5) $}
  child 
  {
    node [circle,draw] (a) {$ S_{2}(2) $}
    child 
    { 
      [draw=white]
      node [draw=white] (b) {$  $}
    }
    child 
    { 
      node [circle,draw] (c) { $ S_{3}(4) $ }
    }
  }
  child 
  {
    node [circle,draw] (d) { $ S_{2}(7) $ } 
    child 
    {
      node [circle,draw] (e) { $ S_{3}(6) $ }
    }
    child 
    {
      node [circle,draw] (f) { $ S_{4}(15) $ }
    }
  };
\end{tikzpicture}
}
\caption*{Figure 2}
\end{figure}
\\
Continuing on, recall the algorithm had to construct the set $S_{3}(15)$ by executing
 six descending procedural calls to determine the base sets $S_{2}(7),\ldots ,S_{2}(2)$.
 As $S_{2}(2)$ is already present in the previous BST, this set is not re-constructed.
 Consequently in like manner to the above, a sequence of node insertions and re-balancing
 continues until all of the previous base sets have been inserted into the BST leading to,
 after another divisibility test, the insertion of the set $S_{3}(15)$. The resulting
 penultimate BST is shown on the left of Figure 3. Upon inserting the base set of $S_{2}(15)$
 to the left of $S_{3}(15)$ the final BST is pictured on the right of Figure 3.
 From this BST, the complete solution set $S(15)$ can now be by retrieved from
 those non-empty nodes contained within the list $\{S_{4}(15),S_{3}(15),S_{2}(15)\}$.\\
\vspace{2cm}

In implementing Algorithm 3.1, one could also have taken advantage of the linear
 complexity in searching and retrieval of such data structures as a Hash table,
 but this was not chosen due to the fact that Hash tables requires {\em a priori}
 estimates for the number of intermediary sets needed, which we do not have, and
 moreover BST are easier to implement in comparison. 
\\
\begin{figure}[htb]
\centering
{\scalefont{\tikzpicturefont}
\begin{tikzpicture}[] 
\node [circle,draw] (z){$ S_{3}(5) $}
  child 
  {
    node [circle,draw] (a) {$ S_{3}(4) $}
    child 
    { 
      node [circle,draw] (b) {$ S_{2}(3)  $}
      child 
      { 
        node [circle,draw] (l) {$ S_{2}(2) $}
      }
      child 
      { 
        node [circle,draw] (k) {$ S_{2}(4)  $}
      }
    }
    child 
    { 
      node [circle,draw] (c) { $ S_{2}(5) $ }
    }
  }
  child 
  {
    node [circle,draw] (d) { $ S_{2}(7) $ } 
    child 
    {
      node [circle,draw] (e) { $ S_{3}(6) $ }
      child
      {
        node [circle,draw] (h) { $ S_{2}(6) $ }
      }
      child
      {
        [draw=white]
        node [draw=white] (i) { }
      }
    }
    child 
    {
      node [circle,draw] (f) { $ S_{4}(15) $ }
      child
      {
        node [circle,draw] (m) { $ S_{3}(15) $ }
      }
      child
      {
        [draw=white]
        node [draw=white] (n) { }
      }
    }
  };
\end{tikzpicture}
}
{\scalefont{\tikzpicturefont}
\begin{tikzpicture}[] 
\node [circle,draw] (z){$ S_{3}(5) $}
  child 
  {
    node [circle,draw] (a) {$ S_{3}(4) $}
    child 
    { 
      node [circle,draw] (b) {$ S_{2}(3)  $}
      child 
      { 
        node [circle,draw] (l) {$ S_{2}(2) $}
      }
      child 
      { 
        node [circle,draw] (k) {$ S_{2}(4)  $}
      }
    }
    child 
    { 
      node [circle,draw] (c) { $ S_{2}(5) $ }
    }
  }
  child 
  {
    node [circle,draw] (d) { $ S_{2}(7) $ } 
    child 
    {
      node [circle,draw] (e) { $ S_{3}(6) $ }
      child
      {
        node [circle,draw] (h) { $ S_{2}(6) $ }
      }
      child
      {
        [draw=white]
        node [draw=white] (i) { }
      }
    }
    child 
    {
      node [circle,draw] (f) { $ S_{3}(15) $ }
      child
      {
        node [circle,draw] (m) { $ S_{2}(15) $ }
      }
      child
      {
        node [circle,draw] (n) { $ S_{4}(15) $ }
      }
    }
  };
\end{tikzpicture}
}
\\
\caption*{Figure 3}
\end{figure}
\\
To conclude we now address how the ``base" sets $S_{2}(\cdot)$
 were constructed within the C++ program.   
 From (\ref{eq:1ex}) it is clear that in order $S_{2}(n)$ can be
 determined for large $n>2$, one needs to perform a factorization of
 the form $n-1=d((n-1)/d)$, over all divisors $d$ of $n-1$, with
 $d\leq\sqrt{n-1}$. This was made practical via the use of the online
 number theory library LIDIA (see \cite{lidia}), which provided the
 necessary factorization algorithm, together with GMP (see \cite{gmp}), a
 free online library to perform the high precision integer arithmetic
 to compute and output the numerical values of the terms
 $\frac{n-1}{d}+1$ and $d+1$ necessary to construct each of elements of
 the set $S_{2}(n)$
\\
\section{Searching for Extra Exceptional Values}
As mentioned previously the Diophantine equation in (\ref{eq:2ex}) always has a solution for $n\geq 2$, as the ordered $n$-tuple $(2,n;n-2)$ satisfies the arithmetic identity
\[
\underbrace{1+1+\cdots +1}_{(n-2) 1's}+2+n=\underbrace{1\cdot 1\cdots 1}_{(n-2)1's}\cdot2\cdot n\mbox{ .}
\]
The ordered $n$-tuple $(2,n;n-2)$ was referred to in \cite{ecker} as the ``basic solution''. It has been noted in \cite{ecker},\cite{kurland} that for some values of $n\geq 2$, the basic solution is the only solution to the ESP-Problem in $n$ variables, that is where $S(n)=\{ (2,n;n-2)\}$. These special values of $n$ have been termed in \cite{ecker} as the Exceptional Values, of which $n=2$ is clearly one such value, via Lemma 2.1. From the structure of the solution set $S(n)$, one can see that for an integer $n>2$ to be an exceptional value, necessarily $S_{2}(n)=\{ (2,n;n-2)\}$. Now in view of (\ref{eq:1ex}), as the cardinality of $S_{2}(n)$ is equal to the number of ordered factorizations of $n-1=ab$, with $1\leq a\leq b$, we deduce that $S_{2}(n)=\{ (2,n;n-2)\}$ if and only if $n-1$ is prime. Thus when searching for exceptional values of $n>2$, it suffices to concentrate the search on those $n>2$ for which $n-1$ is prime. 
 Armed with this information the authors in \cite{ecker},\cite{kurland} via an exhaustive computer search of all primes
 less than $10^{10}$, uncovered the following list of exceptional values contained in the set $E=\{ 2,3,4,6,24,114,174,444\}$. As stated earlier, it is still an open conjecture as to whether the set $E$ is complete. In the following result, we provide a stricter necessary condition for an integer $n>2$ to be an element of $E$. Recall that a prime $p$ is a Sophie Germain prime if $2p+1$ is also a prime.
\begin{theo}
If an integer $n>2$ is an exceptional value of the ESP-Problem in $n$ variables, then $n-1$ must be a Sophie Germain prime number.
\end{theo}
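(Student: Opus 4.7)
The plan rests on the observation already made in the paragraph preceding the theorem: for $n>2$ to be exceptional we must have $S_{2}(n)=\{(2,n;n-2)\}$, and by the characterization of $S_{2}(n)$ in Lemma 2.2 this forces $n-1$ to be prime. So one half of the Sophie Germain condition is free, and the whole task reduces to showing that $2(n-1)+1=2n-1$ is also prime.

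I would argue by contraposition: assume $n-1$ is prime (so $n\geq 3$) but $2n-1$ is composite, and exhibit an element of $S_{3}(n)$ distinct from the basic solution, which by itself contradicts exceptionality. The key algebraic step is to restrict attention to prospective solutions of the ESP-equation in three non-unit variables whose smallest component is $2$, i.e.\ of the form $(2,x_{2},x_{3};n-3)$. Substituting $x_{1}=2$ into $x_{1}x_{2}x_{3}=x_{1}+x_{2}+x_{3}+n-3$, then multiplying by $2$ and completing, collapses the equation into the clean factorization identity
\[
(2x_{2}-1)(2x_{3}-1) = 2n-1.
\]

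Since $2n-1\geq 5$ is odd, any non-trivial factorization $2n-1=ab$ has both $a$ and $b$ odd and $\geq 3$. Choosing such a factorization with $a\leq b$ and setting $x_{2}=(a+1)/2$, $x_{3}=(b+1)/2$ produces integers with $2\leq x_{2}\leq x_{3}$, so the triple $(2,x_{2},x_{3};n-3)$ is a bona fide element of $S_{3}(n)$, and it is obviously distinct from $(2,n;n-2)\in S_{2}(n)$. This contradicts the hypothesis that $n$ is an exceptional value, so $2n-1$ must be prime and $n-1$ is a Sophie Germain prime.

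I do not foresee any genuine obstacle. The one piece of insight required is recognising that fixing $x_{1}=2$ in the three-variable ESP-equation reduces it to the symmetric factorization $(2x_{2}-1)(2x_{3}-1)=2n-1$; once this identity is in hand, the construction of the alternative solution, the verification that its components are at least $2$ in the right order, and the contradiction with exceptionality are all immediate.
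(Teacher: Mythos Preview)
Your proof is correct. Both you and the paper exploit the same underlying fact---that solutions $(2,x_{2},x_{3};n-3)\in S_{3}(n)$ correspond to nontrivial factorisations of $2n-1$---but you reach it by a different route. The paper invokes the recursive description~(\ref{eq:7}) of $S_{3}(n)$ in terms of the sets $S_{2}(j+3)$, substitutes the basic solution $(2,j+3;j+1)$ into the divisibility test $w=1+\frac{n-3-j}{x_{1}+x_{2}+j}\in\mathbb{N}$, and after doubling obtains the condition $(2j+5)\mid(2n-1)$ with $2j+5$ ranging over the odd integers $3,5,\ldots,n-3$; emptiness of $S_{3}(n)$ then forces $2n-1$ to be prime. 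You bypass the recursive machinery entirely: fixing $x_{1}=2$ in the three-variable ESP equation and completing the product yields the identity $(2x_{2}-1)(2x_{3}-1)=2n-1$ directly, after which a non-basic solution is read off from any nontrivial factorisation of $2n-1$. Your version is more self-contained and would stand independently of Section~3; the paper's version has the merit of showcasing an application of the algorithm it has just constructed.
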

\begin{proof}
If $n>2$ is an exceptional value, then necessarily $n-1$ is a prime number and the set $S_{3}(n)=\emptyset$. Recalling that the recursive generation of the set $S_{r}(n)$ is summarized in equation (\ref{eq:7}), observe after setting $r=3$ that
\begin{equation}
S_{3}(n)=\bigcup_{j=-1}^{\lfloor\frac{n-7}{2}\rfloor}\{ (x_{1},x_{2},w;n-3): (x_{1},x_{2};j+1)\in S_{2}(j+3), w=1+\frac{n-3-j}{x_{1}+x_{2}+j}\in\mathbb{N}\}\mbox{ .}\label{eq:10}
\end{equation}
Under assumption all the sets within the union must be empty, and so from (\ref{eq:10}) we deduce upon substituting the basic solution $(2,j+3;j+1)\in S_{2}(j+3)$ into the expression for $w$, that 
$w:=\frac{j+2+n}{2j+5}\not\in\mathbb{N}$, for $j=-1,\ldots ,\lfloor\frac{n-7}{2}\rfloor$. As $2j+5$ is an odd integer the previous conclusion further implies that $2w\not\in\mathbb{N}$, and so observe \begin{eqnarray}
2w:=\frac{2j+4+2n}{2j+5} & = & \frac{2j+5+(2n-1)}{2j+5} \nonumber\\
                             & = & 1 +\frac{2n-1}{2j+5}\not\in\mathbb{N}\mbox{ ,}\label{eq:11}
\end{eqnarray}
for $j=-1,\ldots ,\lfloor\frac{n-7}{2}\rfloor$. Recalling that $n-1$ is prime and so $n$ is an even integer, one finds that $\lfloor\frac{n-7}{2}\rfloor =\lfloor\frac{n-6}{2}-\frac{1}{2}\rfloor=\frac{n-6}{2}-1$, consequently $2j+5$ assumes the values of all odd integers $3,5,\ldots ,n-3\leq\sqrt{2n-1}<n$, for $j=-1,\ldots ,\lfloor\frac{n-7}{2}\rfloor$. As $n-1$ clearly does not divide $2n-1=2(n-1)+1$, we deduce from (\ref{eq:11}) that $2n-1=2(n-1)+1$ must be a prime number and so by definition $n-1$ is a Sophie Germain prime number.
\end{proof}  
 
 It is conjectured that the supply of Sophie Germain primes in infinite, and that the number of Sophie Germain primes less than or equal to $n$ is $O(\frac{n}{(\ln n)^{2}})$ (see \cite[p. 165]{riben}). Regardless of the validity of this conjecture, we can see that Theorem 1 together with the working program in \cite{program1}, provides us with a refined searching scheme for extra elements of $E$.
 Indeed all that is required 
 to take each known Sophie Germain prime number $p$ and input $n=p+1$ into \cite{}, if one non-basic solution is uncovered then terminate the algorithm, else we can conclude that $n=p+1$ is an extra exceptional value. To make this searching scheme practicable, one would need to have access to sufficiently large computing power, in view of the length of the largest known Sophie Germain prime number to date. As of April $2012$ the largest known Sophie Germain Prime number, discovered by P. Bliedung, using a distributed PrimeGrid search is, $18543637900515\times 2^{666667}-1$, and is comprised of $200701$ decimal digits (see \cite{blied}). Apart from this computational investigation, a final alternate perspective of Theorem 1 we may wish to consider is that, if it were possible to prove that the set $E$ was infinite, then the infinitude of Sophie Germain primes would
 follow at once. However the consensus of the authors is that this is very unlikely, as the Sophie Germain prime conjecture is a sufficiently deep problem of number theory, that such an elementary approach to settle this conjecture,
 would be at best wishful thinking. 

{\em RMIT University, GPO Box 2467V, Melbourne, Victoria 3001, Australia}\\
{\em michael.nyblom@rmit.edu.au}\\
{\em cheltonevans@gmail.com}

\begin{thebibliography}{99}

\bibitem{blied} P. Bliedung, \emph{PrimeGrid's Sophie Germain Search}, available at \url{http://www.primegrid.com/download/SGS\_6666667.pdf}.
\bibitem{ecker} M. W. Ecker, When does a sum of Positive Integers Equal Their Product? {\em Mathematics Magazine}. Vol. 75, No. 1 (Feb., 2002), pp. 41-47.
\bibitem{knuth} D. E. Knuth, {\em The Art of Computer Programming Sorting}, Vol. 3, Second Edition, Addison-Wesley, 1998.
\bibitem{kurland} L. Kurlandchick and A. Nowicki, When the Sum Equals the Product. {\em The Mathematical Gazette}, Vol. 84, No. 499 (Mar., 2000), pp. 91-94.
\bibitem{riben} P. Ribenboim, The Large Book of Big Primes, Springer, Berlin, 1991.
\bibitem{lidia}Lidia, \emph{LiDIA is a very old C++ library for computational number theory}, available at \url{http://www.fluxionsdividebyzero.com/p1/comsci/lidia/doc.html}.
\bibitem{gmp} GMP, \emph{The GNU Multiple Precision Arithmetic Library}, see \url{http://gmplib.org/}.
\bibitem{program1} C++ Program, available at \url{http://www.fluxionsdividebyzero.com/p1/misc/proj/numberESP/doc.html}.
\bibitem{webpage} webpage, available at \url{http://www.fluxionsdividebyzero.com/p1/comsci/javascript/j025/doc000000181.html}.
\bibitem{graham} R. L. Graham, D. E. Knuth, O. Patashnik, {\em Concrete Mathematics}, Second Edition, Addison-Wesly, Boston, 1989.

\end{thebibliography}
\end{document}